\newcolumntype{L}{>{$}l<{$}}
\DeclareMathAlphabet{\mathcal}{OMS}{cmsy}{m}{n}
\newcommand{\blstar}{\mathbf{!L^*}}
\newcommand{\id}{\mathrm{id}}
\newcommand{\ev}{\mathrm{ev}}
\newcommand{\Typ}{\mathrm{Typ}}
\newcommand{\C}{\mathcal{C}}
\newcommand{\F}{\mathcal{F}}
\newcommand{\R}{\mathbb{R}}
\newcommand{\fdVect}{\mathbf{FdVect}}
\newcommand{\Vect}{\mathbf{Vect}}
\newcommand{\Alg}{\mathbf{Alg}}
\newcommand{\Aalg}{\mathbf{Aalg}}
\newcommand{\bs}{\backslash}
\renewcommand{\L}{\mathbf{L}}
\newcommand{\ov}[1]{\overrightarrow{#1}}
\renewcommand{\epsilon}{\varepsilon}
\renewcommand{\phi}{\varphi}
\newtheorem{defn}{Definition}
\newtheorem{remark}{Remark}
\newtheorem{thm}{Theorem}
\newcommand{\semantics}[1]{\llbracket #1 \rrbracket } 
\begin{document}
\title{Categorical Vector Space Semantics for \\ Lambek Calculus with a Relevant Modality \\ (Extended Abstract)}
\def\titlerunning{Vector Semantics for $\blstar$}

\author{Lachlan McPheat\qquad \qquad Mehrnoosh Sadrzadeh
\institute{University College London \\London, UK}
\email{\{m.sadrzadeh,l.mcpheat\}@ucl.ac.uk}
\and
Hadi Wazni
\institute{Queen Mary University London \\ London, UK}
\email{hadi.wazni@hotmail.com}
\and
Gijs Wijnholds
\institute{Utrecht University \\ Utrecht, NL}
\email{g.j.wijnholds@uu.nl}
}
\def\authorrunning{L. McPheat, M. Sadrzadeh, H. Wazni \& G. Wijnholds}
%

\maketitle              
\begin{abstract}
We develop a categorical compositional distributional semantics for  Lambek Calculus  with a Relevant Modality, $\blstar$,  which has a limited version of the contraction and permutation rules. The categorical part of the semantics is a monoidal biclosed category with a coalgebra modality as defined on Differential Categories. We instantiate this category to finite dimensional vector spaces and linear maps via  ``quantisation" functors and work with three concrete interpretations of the  coalgebra modality. We apply the model to construct categorical and concrete semantic interpretations  for the motivating example of $\blstar$: the derivation of a phrase with a parasitic gap. The  effectiveness of the concrete interpretations  are evaluated via a  disambiguation task,  on an extension of a sentence disambiguation dataset to parasitic gap phrases, using BERT, Word2Vec, and FastText vectors and Relational tensors.   

\end{abstract}

\section{Introduction}
Distributional Semantics of natural language are semantics which model the \textit{Distributional Hypothesis} due to Firth \cite{Firth1957} and Harris \cite{Harris1954} which assumes \textit{a word is characterized by the company it keeps}. Research in Natural Language Processing (NLP) has turned to Vector Space Models (VSMs) of natural language to accurately model the distributional hypothesis. 
Such models date as far back as to Rubinstein and Goodenough's co-occurence matrices \cite{RubinsteinGoodenough} in 1965, until today's neural machine learning methods, leading to embeddings, such as Word2Vec \cite{Word2Vec}, GloVe \cite{pennington2014glove}, FastText \cite{FastText} or BERT \cite{BERT} to name a few. VSMs were used even earlier by Salton \cite{Salton64} for information retrieval. 
These models have plenty of applications, for instance thesaurus extraction tasks \cite{Curran2003,Grefenstette1994}, automated essay marking \cite{Landauer1997} and semantically guided information retrieval \cite{Manning2008}.
However, they lack grammatical compositionality, thus making it difficult to sensibly reason about the semantics of portions of language larger than words, such as phrases and sentences. 

Somewhat orthogonally, Type Logical Grammars (TLGs) form highly compositional models of language by accurately modelling grammar, however they lack distributionality, in that such models do not accurately describe the distributional semantics of a word, only its grammatical role.  
Distributional Compositional Categorical Semantics (DisCoCat)\cite{Coeckeetal2010} combines these two approaches using category theoretic methods, originally developed to model Quantum protocols. DisCoCat has proven its efficacy empirically \cite{GrefenSadrEMNLP, Grefenstette2015, Sadrzadeh2018, wijnholds-sadrzadeh-2019-evaluating, kartsaklis-sadrzadeh-2013-prior,milajevs-etal-2014-evaluating} and has the added utility of being a modular framework which is open to additions and extensions. 

DisCoCat is a categorical semantics of a formal system which models natural language syntax, known as Lambek Calculus\footnote{There is a parallel pregroup syntax which gives you the same semantics, as discussed in \cite{Coeckeetal2013}}, denoted by $\L$. The work in \cite{Kanovich2016} extends Lambek calculus with a relevant modality, and denotes the resulting logic by $\blstar$. As an example application domain, they use the new logic to formalise the grammatical structure of the parasitic gap phenomena in natural language. 

In this paper, we first form a sound categorical semantics of $\blstar$, which we call $\C(\blstar)$. This boils down to interpreting the logical contraction of $\blstar$ using comonads known as \textit{coalgebra modalities} defined in \cite{St2006}. 
In order to facilitate the categorical computations, we use  the  clasp-string calculus of \cite{BaezStay2011}, developed for depicting the computations of a  monoidal biclosed category. To this monoidal diagrammatic diagrammatic calculus, we add the necessary new constructions  for  the coalgebra modality and its operations.
Next, we define three candidate coalgebra modalities on the category of finite dimensional real vector spaces in order to form a sound VSM of $\blstar$ in terms of structure-preserving functors $\C(\blstar) \to \fdVect_\R$. 
We also briefly introduce a prospective diagrammatic semantics of $\C(\blstar)$ to help visualise our derivations. 
We conclude this paper with an experiment to test the accuracy of the different coalgebra modalitites on $\fdVect_\R$. The experiment is performed using different neural word embeddings and  on a disambiguation task over an extended version of  dataset of [16] from transitive sentences to phrases with parasitic gaps. 

This paper is an extended abstract of the full arXiv paper \cite{mcpheat2020categorical}.

\section{$\blstar$: Lambek Calculus with a Relevant Modality}
Following \cite{Kanovich2016}, we assume that the  formulae, or types,  of  Lambek calculus  with a Relevant Modality $\blstar$ are generated by a  set of atomic types $\mathrm{At}$,  a unary connective $!$, three binary connectives, $\bs$, $/$ and $,$  via   the following Backus-Naur Form (BNF). 
	\[
		 \phi ::= \phi \in At \mid \emptyset \mid (\phi, \phi) \mid (\phi /\phi) \mid (\phi\bs \phi) \mid !\phi ,
	\]
We refer to the types of $\blstar$  by $\Typ_{\blstar}$	; here,  $\emptyset$ denotes the empty type. An element of $\Typ_{\blstar}$	  is   either atomic, made up of a modal type, or  two types joined by a comma or a slash. We will use uppercase roman letters to denote arbitrary types of $\blstar$, and uppercase Greek letters to denote a  set of types, for example, $\Gamma = \{A_1, A_2, \ldots, A_n\} = A_1, A_2, \ldots, A_n$. It is assumed that $,$ is associative, allowing us to omit brackets in expressions like $A_1, A_2, \ldots, A_n$.
	
	A \textbf{sequent} of $\blstar$  is a pair of an  ordered set of types and a type, denoted by $\Gamma \vdash A$. 
	The derivations of $\blstar$ are generated by the  set of  axioms and rules presented in  table \ref{BLrules}.  The logic $\blstar$ extends  Lambek Calculus  $\L$ by endowing it with a modality denoted by $!$, inspired by the $!$ modality of  Linear Logic, to enable the structure rule of contraction in a controlled way, although here it is  introduced on a non-symmetric monoidal category but is introduced with an extra structure allowing the $!$-ed types to commute over other types. So what $\blstar$ adds to $\L$ is the $(!L), (!R)$ rules, the $(\mathrm{perm})$ rules, and the $(\mathrm{contr})$ rule.

	\begin{table}[H]
	\centering
	\scalebox{0.95}{\begin{tabular}{rlrl}
             
             \prftree{A \vdash A} 
             
             & &&
             
             \\
             \\
             
             \prftree[r]{$\scriptstyle{(/ L)}$}
             {\Gamma \vdash A}
             {\Delta_1, B, \Delta_2 \vdash C}
             {\Delta_1, B / A, \Gamma, \Delta_2 \vdash C}
             
             &
             
             \prftree[r]{$\scriptstyle{(/ R)}$}
             {\Gamma, A \vdash B}
             {\Gamma \vdash B/A}
             
           &
             
             \prftree[r]{$\scriptstyle{(\bs L)}$}
             {\Gamma \vdash A}
             {\Delta_1, B, \Delta_2 \vdash C}
             {\Delta_1, \Gamma, A\bs B, \Delta_2 \vdash C}
             
             &
             
             \prftree[r]{$\scriptstyle{(\bs R)}$}
             {A, \Gamma \vdash B}
             {\Gamma \vdash A \bs B}

            \\
            \\
             \prftree[r]{$\scriptstyle{(!L)}$}
                {\Gamma_1, A, \Gamma_2 \vdash C}
                {\Gamma_1, !A, \Gamma_2\vdash C}
                
                &
                \prftree[r]{$\scriptstyle{(!R)}$}
                {!A_1,\ldots, !A_n \vdash B}
                {!A_1,\ldots, !A_n \vdash !B}
                
               &
                
                \prftree[r]{$\scriptstyle{(\mathrm{perm}_1)}$}{\Delta_1, !A,\Gamma, \Delta_2\vdash C}{\Delta_1, \Gamma, !A, \Delta_2 \vdash C}
                
                &
                \prftree[r]{$\scriptstyle{(\mathrm{perm}_2)}$}{\Delta_1, \Gamma, !A, \Delta_2 \vdash C}{\Delta_1, !A,\Gamma, \Delta_2\vdash C}
                
                \\ 
                \\
                
                &&
                
                \prftree[r]{$\scriptstyle{(\mathrm{contr})}$}
                {\Delta_1, !A, !A, \Delta_2 \vdash C}
                {\Delta_1, !A, \Delta_2 \vdash C}
                &
\\
&&&
	\end{tabular}}
	\caption{Rules of $\blstar$.}\label{BLrules}
	\end{table}

\section{Categorical Semantics for $\blstar$}
\label{sec:catsem}

	We  associate $\blstar$ with a category $\C(\blstar)$,  with $\Typ_{\blstar}$ as objects, and derivable sequents  of $\blstar$ as morphisms whose domains are the formulae on the left of the turnstile and codomains the formulae on the right. The category $\C(\blstar)$ is  monoidal biclosed\footnote{We follow the convention that products are not symmetric unless stated, hence a monoidal product is not symmetric unless  referred to by  `symmetric monoidal'.}. The connectives $,$ and $\bs, /$ in $\blstar$ are associated  with the monoidal  structure on $\C(\blstar)$, where $,$ is the monoidal product, with the empty type as its unit and  $\bs, /$ are associated with the two internal hom functors with respect to $,$, as presented in  \cite{Selinger2010}.
 	The connective $!$ of $\blstar$ is a \emph{coalgebra} modality, as defined for \emph{Differential Categories} in \cite{St2006}, with the difference that our underlying category is not necessarily symmetric monoidal, but we ask for a restricted symmetry  with regards to $!$ and that $!$ be a lax monoial functor. In Differential Categories $!$  does not necessarily have a monoidal property, i.e.  it is not a  strict, lax, or strong monoidal functor, but there are examples of Differential Categories  where strong monoidality holds. We elaborate on these notions via the following definition. 
\begin{defn}
\label{def:cblstar}
	The category $\,\C(\blstar)$ has types of $\blstar$, i.e. elements of $\,\Typ_\L$,  as objects, derivable sequents of $\blstar$ as morphisms, together with the following structures:
\begin{itemize}
	\item A \textbf{monoidal product} $\otimes \colon \C(\blstar) \times \C(\blstar) \to \C(\blstar)$,  with a unit $I$.
	\item \textbf{Internal hom-functors} $\Rightarrow : \C(\blstar)^{\mathrm{op}} \times \C(\blstar) \to \C(\blstar)$,  $\Leftarrow : \C(\blstar) \times \C(\blstar)^{\mathrm{op}} \to \C(\blstar)$ such that: 
	\begin{enumerate}[i.]
	\item For objects $A,B \in \C(\blstar)$, we have objects $(A \Rightarrow B), (A \Leftarrow B) \in \C(\blstar)$ and a pair of  morphisms, called \emph{right and left evaluation}, given below:
	\[
	\ev_{A, (A \Rightarrow B)}^r \colon A \otimes (A \Rightarrow B) \longrightarrow A, 
	\qquad
	\ev_{(A \Leftarrow B), B}^l \colon (A \Leftarrow B) \otimes B \longrightarrow A
	\]
	\item For morphisms $f \colon A \otimes C \longrightarrow B, g \colon C \otimes B \longrightarrow A$, we have unique \emph{right and left curried} morphisms, given below:
	\[
	\Lambda^l (f) \colon C \longrightarrow (A \Rightarrow B), \qquad
	\Lambda^r (g) \colon C \longrightarrow (A \Leftarrow B)
	\]
	\item The following hold
	\[
	\ev^l_{A, B} \circ (\id_A \otimes \Lambda^l(f)) = f, \qquad
	\ev^r_{A, B} \circ (\Lambda^r(g) \otimes \id_B) = g
	\]
	\end{enumerate}
	\item A {\bf  coalgebra modality} $!$  on $\C(\blstar)$. That is, a lax monoidal comonad $(!,\delta, \epsilon)$ such that:
	
	\begin{itemize}
	\item[] For every object $A\in \C(\blstar)$, the object $!A$ has a comonoid structure $(!A, \Delta_A, e_A)$ in $\C(\blstar)$. Where the comultiplication $\Delta_A: !A \to !A \otimes !A$, and the counit $e_A: !A \to I$ satisfy the usual comonoid equations. Further, we require  $\delta_A : !A \to !!A$ to be a morphism of comonoids  \cite{St2006}. \footnote{Strictly speaking, this definition applies to symmetric monoidal categories, however we may abuse notation without worrying, as we have symmetry in the image of $!$ coming from the restricted symmetries $\sigma^l, \sigma^r$.}
	\end{itemize}
	
	\item \textbf{Restricted symmetry} over the coalgebra modality, that is, natural isomorphisms $\sigma^r : 1_{\C(\blstar)} \otimes \, ! \to ! \otimes 1_{\C(\blstar)}$ 
	and $\sigma^l : ! \otimes 1_{\C(\blstar)} \to 1_{\C(\blstar)} \otimes \, !$.
	\[
		\sigma^r_{A,B} : A\,  \otimes\,  !B \longmapsto\,  !B\,  \otimes A, \qquad
		\sigma^l_{A,B} : !A\,  \otimes B \longmapsto\, B\,  \otimes\,  !A.
	\]
\end{itemize}
\end{defn}

\noindent
We  now define a categorical semantics for $\blstar$ as the map $\semantics{\ } \colon \blstar \to \C(\blstar)$ and prove that it is sound. 

\begin{defn}
\label{def:sem}
The  \emph{semantics} of   formulae and sequents of  \ $\blstar$ is the image of  the interpretation map $\semantics{\ } \colon \blstar \to \C(\blstar)$. To elements $\phi$ in  $\Typ_\L$,  this map assigns objects $C_\phi$ of $\C(\blstar)$, as defined below:
\[\begin{array}{cccccc}
  \semantics{\emptyset} &:=& C_{\emptyset} =  I &\qquad \qquad \semantics{\phi} &:=& C_{\phi}\\
 \semantics{(\phi, \phi)} &:=& C_{\phi} \otimes C_{\otimes} & \qquad \qquad \semantics{!\phi} &:=& ! C_\phi \\
 \semantics{(\phi / \phi)} &:=& (C_{\phi} \Leftarrow  C_{\phi}) & \qquad \qquad \semantics{(\phi \bs \phi)} &:=&(C_{\phi} \Rightarrow C_{\phi})
\end{array}\]
To the sequents $\Gamma \vdash A$ of $\blstar$, for $\Gamma = \{A_1, A_2, \cdots A_n\}$ where $A_i, A \in \Typ_\L$, it assigns  morphism of  $\C(\blstar)$ as follows $\semantics{\Gamma \vdash A} :=  C_\Gamma \longrightarrow C_A$, for $C_\Gamma = \semantics{A_1} \otimes \semantics{A_2} \otimes \cdots \otimes \semantics{A_n}$.
\end{defn}

\noindent
Since sequents are not labelled, we have no obvious name for the linear map $\semantics{\Gamma \vdash A}$, so we will label such morphisms by lower case roman letters as needed.

\begin{defn}
\label{def:model}
A \textbf{categorical model}  for  $\blstar$, or a  \textbf{$\blstar$-model},  is a pair $(\C, \semantics{ \ }_\C)$, where $\C$ is a monoidal biclosed category with a coalgebra modality and restricted symmetry, and $\semantics{ \ }_\C $ is a mapping $\Typ_{\blstar} \to \C$ factoring through $\semantics{ \ } : \Typ_\L \to \C(\blstar)$. 

\end{defn}


\begin{defn}
\label{def:truth}
A  sequent $\Gamma \vdash A$ of $\blstar$ is sound in $(\C(\blstar), \semantics{\ })$, iff $C_\Gamma \longrightarrow C_A$ is a morphism of $\C(\blstar)$.  A rule $\frac{\Gamma \vdash A}{\Delta \vdash B}$ of $\blstar$ is  \emph{sound} in  $(\C(\blstar), \semantics{\ })$ iff whenever  $C_\Gamma \longrightarrow C_A$ is sound then  so is $C_\Delta \longrightarrow C_B$. We say $\blstar$ is sound with regards to $(\C(\blstar), \semantics{\ })$ iff  all of its rule are.
\end{defn}

\begin{thm}
\label{thm:sound}
$\blstar$ is sound with regards to $(\C(\blstar), \semantics{ \ })$. 
\end{thm}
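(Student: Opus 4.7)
The plan is to proceed by induction on the derivation of $\Gamma \vdash A$, exhibiting for each rule of $\blstar$ a canonical morphism of $\C(\blstar)$ built from the structure in Definition~\ref{def:cblstar} that interprets the conclusion sequent given interpretations of the premises. For the base case, the axiom $A \vdash A$ is sent to $\id_{C_A}$, which is a morphism of $\C(\blstar)$ by the category axioms. For the inductive step, I would walk through the rules in the order they appear in Table~\ref{BLrules}.

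The slash rules are handled by the biclosed structure: given the inductively provided interpretations $f \colon C_\Gamma \to C_A$ and $g \colon C_{\Delta_1} \otimes C_B \otimes C_{\Delta_2} \to C_C$, the interpretation of the conclusion of $(/L)$ is obtained by composing $g$ with $\id \otimes \ev^l \otimes \id$ after tensoring $f$ into the second slot, and analogously for $(\bs L)$ using $\ev^r$. The right rules $(/R)$ and $(\bs R)$ are interpreted by the curryings $\Lambda^r$ and $\Lambda^l$ directly; the defining equations in Definition~\ref{def:cblstar}(iii) are not needed for soundness itself, only for the functional discipline of the interpretation. The modal left rule $(!L)$ is interpreted by precomposing the inductive morphism with $\id \otimes \epsilon_A \otimes \id$, using the counit of the comonad. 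The rule $(!R)$ is interpreted by composing the inductive morphism $f \colon !A_1 \otimes \cdots \otimes !A_n \to B$ with $!f$ after gluing together $\delta_{A_1} \otimes \cdots \otimes \delta_{A_n}$ with the lax monoidal structure map $!A_1 \otimes \cdots \otimes !A_n \to !(!A_1 \otimes \cdots \otimes !A_n)$; this step is where lax monoidality of $!$ is essential.

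The two permutation rules are interpreted by iterated use of the restricted symmetries $\sigma^r$ and $\sigma^l$: since $\Gamma$ may be a non-trivial list of types $B_1, \ldots, B_k$, the commutation of $!A$ past $\Gamma$ is realised by composing $\sigma^r_{B_k, A} \circ \cdots \circ \sigma^r_{B_1, A}$ (respectively $\sigma^l$), sandwiched between identities on $\Delta_1$ and $\Delta_2$; naturality of these isomorphisms ensures the composite is well-defined. The contraction rule $(\mathrm{contr})$ is interpreted by precomposing with $\id \otimes \Delta_A \otimes \id$, using the comultiplication of the comonoid structure carried by $!A$.

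The main obstacle is the $(!R)$ case, since it is the only rule whose soundness requires genuine use of the comonad's coalgebra structure \emph{and} lax monoidality simultaneously, and care must be taken to thread the $\delta$'s through the lax monoidal map in a way that yields a morphism of comonoids out of $!A_1 \otimes \cdots \otimes !A_n$; the other potentially delicate point is verifying that the iterated permutation morphisms for $(\mathrm{perm}_1)$ and $(\mathrm{perm}_2)$ do not implicitly require a full symmetry on $\C(\blstar)$, which is avoided precisely because at each step one of the tensor factors is a $!$-object and the restricted symmetries apply. Everything else reduces to routine bookkeeping with $\ev$, $\Lambda$, $\epsilon$ and $\Delta$.
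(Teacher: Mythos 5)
Your proposal is correct and follows the standard rule-by-rule induction on derivations that the paper's proof (deferred to the full arXiv version) also uses: each rule of Table~\ref{BLrules} is interpreted by exactly the piece of structure in Definition~\ref{def:cblstar} that was introduced for it (evaluation/currying for the slash rules, $\epsilon$ for $(!L)$, $\delta$ plus the lax monoidal map and $!(-)$ for promotion, $\sigma^{l},\sigma^{r}$ for the permutations, and the comonoid comultiplication $\Delta_A$ for contraction). Your identification of $(!R)$ as the one case genuinely requiring lax monoidality, and of the restricted symmetries as sufficient for the permutation rules without full symmetry, matches the intended argument.
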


\begin{proof}
See  full paper \cite{mcpheat2020categorical}.
\end{proof}

\section{Vector Space Semantics for ${\cal C}(\blstar)$}

	Following \cite{Coeckeetal2013}, we  develop vector space semantics for  $\blstar$, via a \emph{quantisation} functor  to the category of finite dimensional vector spaces and linear maps $F: \C(\blstar) \to \fdVect_\R$. This functor  interprets objects as finite dimensional vector spaces, and derivations as linear maps. Quantisation is the term first introduced by Atiyah in Topological Quantum Field Theory, as a functor from the category of manifolds and cobordisms to the category of vector spaces and linear maps. Since the cobordism category is monoidal, quantisation was later  generalised to refer to a functor that `quantises' any   category in  $\fdVect_\R$.
Since $\C(\blstar)$ is free, there is a unique functor $\C(\blstar) \to (\fdVect_\R, !)$ for any choice of $!$ such that $(\fdVect_\R, !)$ is a $\blstar$-model. In definition \ref{def:quantF} we introduce the necessary nomenclature to define quantisations in full.
	
\begin{defn}
\label{def:quantF}
A \textbf{quantisation} is a functor $F : \C(\blstar) \to (\fdVect_\R,!)$, defined on the objects of $\C(\blstar)$ using the structure of the formulae  of $\blstar$, as follows: 
	\begin{align*}
		F(C_{\emptyset}) &:= \R& \qquad  F(C_\phi) &:= V_\phi\\
            	F(C_{\phi \otimes \phi}) & :=  V_\phi \otimes V_\phi &  \qquad  F (C_{!\phi}) &:=!V_\phi \\
            	F (C_{\phi / \phi}) &:=(V_\phi \Leftarrow V_\phi) & \qquad 	F(C_{\phi \bs \phi}) &:=  (V_\phi \Rightarrow V_\phi)
	\end{align*}
Here,  $V_\phi$ is the vector space in which vectors of words with an atomic type live  and the other vector spaces are obtained from it by induction on the structure of the formulae they correspond to. Morphisms  of $\C(\blstar)$ are of the form $C_\Gamma \longrightarrow C_A$, associated with sequents $\Gamma \vdash A$ of $\blstar$, for $\Gamma = \{A_1, A_2, \cdots, A_n\}$.  The quantisation functor is defined on these morphisms as follows:
\[
F(C_{\Gamma} \longrightarrow C_A) := F(C_\Gamma) \longrightarrow F(C_A) = V_{A_1} \otimes V_{A_2} \otimes \cdots \otimes V_{A_n} \longrightarrow V_A
\]

\end{defn}

	Note that the monoidal product in $\fdVect_\R$ is symmetric, so there is formally no need to distinguish between $(\semantics{A} \Rightarrow \semantics{B})$ and $(\semantics{B} \Leftarrow \semantics{A})$. However it may be practical to do so when calculating things by hand, for example when retracing derivations  in the semantics. 
	We should also make clear that the freeness of $\C(\blstar)$ makes $F$ a \emph{strict} monoidal closed functor, meaning that $F(C_A \otimes C_B) = FC_A \otimes FC_B$, or rather, $V_{(A\otimes B)} = (V_A \otimes V_B)$, and similarly, $V_{(A\Rightarrow B)} = (V_A \Rightarrow V_B)$ etc.
	Further, since we are working with finite dimensional vector spaces we know that $V_\phi^\bot \cong V_\phi$, thus our internal homs have an even simpler structure, which we exploit when computing, which is $V_\phi \Rightarrow V_\phi \cong V_\phi \otimes V_\phi$.
	
\section{Concrete Constructions}
\label{sec:concrete}

In this section we present three different coalgebra modalities on $\fdVect_\R$ defined over two different underlying comonads, treated in individual subsections. Defining these modalities lets us reason about sound vector space semantics of $\C(\blstar)$ in terms of $!$-preserving monoidal biclosed functors $\C(\blstar) \to \fdVect_\R$. 

We point out here that we do not aim for \textit{complete} model in that we do not require the tensor of our vector space semantics to be non-symmetric. This is common practice in the DisCoCat line of research and also in the  standard set theoretic semantics of Lambek calculus \cite{VanBenthem1988}. Consider the English sentence "John likes Mary" and the Farsi sentence "John Mary-ra Doost-darad(likes)". These two sentences have the same semantics, but different word orders, thus exemplifying the lack of syntax within semantics.

\subsection{$!$ as the Dual of a Free Algebra Functor}
\label{sec:FSpace}
Following \cite{Blute1994} we interpret $!$ using the Fermionic Fock space functor $\F : \fdVect_\R \to \Alg_\R$. In order to define $\F$ we first introduce the simpler free algebra construction, typically studied in the theory of representations of Lie algebras \cite{Humphreys}. It turns out that $\F$ is itself a free functor, giving us a comonad structure on $U\F$ upon dualising \cite{Blute1994}. The choice of the symbol $\F$ comes from ``Fermionic Fock space" (as opposed to ``Bosonic"), and is also known as the exterior algebra functor, or the Grassmannian algebra functor \cite{Humphreys}.

\begin{defn}
\label{def:T}
The {\bf free algebra functor} $T : \Vect_\R \to \Alg_\R$ is defined on objects as:
	\[ V  \longmapsto  \bigoplus_{l\geq 0} V^{\otimes n} = \R \oplus V \oplus (V \otimes V) \oplus (V\otimes V \otimes V) \otimes \cdots \]
and for morphisms $f : V \to W$, we get the algebra homomorphism $T(f) : T(V) \to T(W)$ defined layer-wise as
	\[ T(f) (v_1 \otimes v_2 \otimes \cdots \otimes v_n) := f(v_1) \otimes f(v_2) \otimes \cdots \otimes f(v_n).\]
\end{defn}

$T$ is free in the sense that it is left adjoint to the forgetful functor $U : \Alg_\R \to \Vect_\R$, thus giving us a monad $UT$ on $\Vect_\R$ with a  monoidal algebra modality structure, i.e. the dual of what we are looking for.  However note that even when restricting $T$ to finite dimensional vector spaces $V \in \fdVect_\R$ the resulting  $UT(V)$ and $UT(V^\bot)^\bot$  are infinite-dimensional. The necessity of working in $\fdVect_\R$ motivates us to use $\F$, defined below, rather than $T$.

\begin{defn}
\label{def:F}
The {\bf Fermionic Fock space functor} $\F : \Vect_\R \to \Alg_\R$\footnote{One may wish to think of $\F$ as having codomain $\Aalg_\R$, the category of antisymmetric $\R$-algebras, which is itself a subcategory $\Aalg_\R \hookrightarrow \Alg_\R$.} is defined on objects as 
	\[V \mapsto \bigoplus_{n \geq 0} V^{\wedge n} = \R \oplus V \oplus (V \wedge V) \oplus (V\wedge V \wedge V) \otimes \cdots\]
where $V^{\wedge n}$ is the coequaliser of the family of maps $(-\tau_\sigma)_{\sigma \in S_n}$, defined as $-\tau_\sigma : V^{\otimes n} \to V^{\otimes n}$ and given as follows:
	\[(-\tau_{\sigma}) ( v_1 \otimes \cdots \otimes v_n) := \mathrm{sgn}(\sigma) (v_{\sigma(1)} \otimes v_{\sigma(2)} \otimes \cdots \otimes v_{\sigma(n)}).\]	
	$\F$ applied to linear maps gives an analogous algebra homomorphism as in \ref{def:T}. 	
\end{defn}

Concretely, one may define $V^{\wedge n }$ to be the $n$-fold tensor product of $V$ where we quotient by the layer-wise equivalence relations $v_1 \otimes v_2 \otimes \cdots \otimes v_n \sim \mathrm{sgn}(\sigma)(v_{\sigma(1)} \otimes v_{\sigma(2)} \otimes \cdots \otimes v_{\sigma(n)})$ for $n = 0, 1, 2 \ldots$ and denoting the equivalence class of a vector $v_1 \otimes v_2 \otimes \cdots \otimes v_n$ by $v_1 \wedge v_2 \wedge \cdots \wedge v_n$. 

Note that simple tensors in $V^{\wedge n}$ with repeated vectors are zero. That is, if $v_i = v_j$ for some $1 \leq i,j\leq n$ and $i\neq j$ in the above, the permutation $(ij) \in S_n$ has odd sign, and so $v_1 \wedge v_2 \wedge \cdots \wedge v_n = 0$, since $v_1 \wedge v_2 \wedge \cdots \wedge v_n  = \mathrm{sgn}(ij) (v_1 \wedge v_2 \wedge \cdots \wedge v_n) = - (v_1 \wedge v_2 \wedge \cdots \wedge v_n)$.

\begin{remark}\label{rem:fDim}
Given a finite dimensional vector space $V$, the antisymmetric algebra $\F(V)$ is also finite dimensional. This follows immediately from the note in definition \ref{def:F}, as basis vectors in layers of $\F(V)$ above the $\dim(V)$-th are forced to repeat entries.
\end{remark}

Remark \ref{rem:fDim} shows that restricting $\F$ to finite dimensional vector spaces turns $U\F$ into an endofunctor on $\fdVect_\R$. We note that $\F$ is the free antisymmetric algebra functor \cite{Blute1994} and conclude that $U\F$ is a monad $(U\F, \mu, \eta)$ on $\fdVect_\R$. 

Given $\F$, there are two ways to obtain a comonad structure $(U\F(V), \Delta_V, e_V)$, thus define a coalgebra modality $(U\F, \delta, \epsilon)$ on $\fdVect_\R$, as desired. One is referred to by \emph{Cogebra} construction and is given below, for a basis $\{e_i\}_i$ of  $V$, and thus a basis $\{1, e_{i_1}, e_{i_2} \wedge e_{i_3}, e_{i_4} \wedge e_{i_5} \wedge e_{i_6}, \cdots\}_{i_j}$ of $U\F(V)$ as:
\[
\Delta(1, e_{i_1}, e_{i_2} \wedge e_{i_3}, e_{i_4} \wedge e_{i_5} \wedge e_{i_6}, \cdots) =
(1, e_{i_1}, e_{i_2} \wedge e_{i_3}, e_{i_4} \wedge e_{i_5} \wedge e_{i_6}, \cdots)\otimes (1, e_{i_1}, e_{i_2} \wedge e_{i_3}, e_{i_4} \wedge e_{i_5} \wedge e_{i_6}, \cdots)
\]

The map $e_V:U\F(V) \to V$ is given by projection onto the first layer, that is
\[1, e_{i_1}, e_{i_2} \wedge e_{i_3}, e_{i_4} \wedge e_{i_5} \wedge e_{i_6}, \cdots \longmapsto e_{i_1}.\]

Another coalgebra modality arises from dualising the monad $U\F$, and the monoid structure on $\F(V)$, or strictly speaking on $U\F (V)$. 
Following \cite{HopfMonads}, we dualise $U\F$ to define a comonad structure on $U\F$ as follows. We take the comonad comultiplication to be $\delta_V := \mu_V^\bot : U\F U\F (V)^\bot \to U\F (V)^\bot$, and the comonad counit to be $\epsilon_V := \eta_V^\bot : U\F(V)^\bot \to V^\bot$. To avoid working with dual spaces one may chose to formally consider $!(V) := U\F(V^\bot)^\bot$, as in \cite{Blute1994}, since $U\F(V^\bot)^\bot \cong U\F (V)$ (although this is not strictly necessary, we choose this notation to stay close to its original usage \cite{Blute1994,HopfMonads}). 
Note that a dualising in this manner only makes sense for finite dimensional vector spaces, as  in general, for an arbitrary family of vector spaces $(V_i)_{i\in I}$, we have $(\bigoplus_{i\in I} V_i)^\bot \cong \prod_{i\in I} (V_i^\bot)$. 
Finite dimensionality of a vector space $V$ makes the direct sum in $U\F(V)$ finite, making the right-hand product a direct sum, i.e. for a finite index $I$ we have $(\bigoplus_{i\in I} V_i)^\bot \cong \bigoplus_{i\in I} (V_{i\in I}^\bot)$, meaning that we indeed have $U\F(V)^\bot \cong U\F(V^\bot)$.
This lets us dualise the monoid structure of $U\F(V)$, giving a comonoid structure on $U\F(V)$ hence making $U\F$ into a coalgebra modality. To compute the comultiplication it suffices to transpose the matrix for the multiplication on $U\F(V)$. However, this is in general intractable, as for $V$ an $n$ dimensional space, $U\F(V)$ will have $2^n$ dimensions, and its multiplication will be a $(2^n)^2 \times  2^n$-matrix. We leave working with a dualised comultiplication to another paper, but in the next subsection use this construction to obtain a richer copying than the Cogebra one mentioned above. 

\subsection{$!$ as the Identity Functor}

The above Cogebra construction can be simplified  when one works with free vector spaces, for details of which we refer to the full version of the paper \cite{mcpheat2020categorical}.  The simplified version resembles half of a bialgebra over $\fdVect_\R$, known as \emph{Special Frobenius bialgebras}, which were used in \cite{Sadretal2013Frob,MoortgatWijn2017,Moortgatetal2019} to model relative pronouns in English and Dutch. As argued in \cite{WijnSadr2017}, however, the copying map resulting from this comonoid structure only copies the basis vectors and does not seem adequate for a full copying operation. In fact, a quick computation shows that this $\Delta$ in a sense only half-copies of the input vector.  In order to see this, consider a vector $\ov{v} = \sum_i C_i s_i$, for $s_i \in S$. Extending the comultiplication $\Delta$ linearly provides us with 
\[
\Delta(\ov{v}) = \sum_i C_i \Delta(s_i) = \sum_i C_i (s_i \otimes s_i) = (\sum_i C_i s_i ) \otimes (\sum_i s_i) =\ov{v} \otimes \vec{1},
\]
In the second term, we have lost the $C_i$ weights, in other words we have replaced the second copy with a vector of 1's, denoted by $\vec {1}$. 
 
The above problem can be partially overcome by noting that this $\Delta$ map is just one of a family of copying maps, parametrised by reals, where for any $k\in \R$ we may define the a \emph{Cofree-inspired} comonoid  $(V_\phi, \Delta_k, e)$ over a vector spafce $V_\phi$ with a basis $(v_i)_i$, as: 
		\[
		\Delta_k: V_\phi \to V_\phi \otimes V_\phi  :: v \mapsto (v \otimes \vec{k}) + (\vec{k} \otimes v), \quad 
		e: V_\phi \to \R :: \sum_i C_i v_i \mapsto \sum_i C_i
		\]
Here,  $\ov{v}$ is as before and $\vec{k}$ stands  for an element of $V$ padded with number $k$. In the simplest case, when $k=1$, we obtain  two copies of the weights $\ov{v}$ and also of its basis vectors, as the following calculation demonstrates. Consider a two dimensional vector space and the vector $a e_1 + b e_2$ in it. The 1 vector $\ov{1}$ is the 2-dimensional vector $e_1 + e_2$ in $V$. Suppose $\ov{v}$ and $\vec{1}$ are column vectors, then applying $\Delta$  results in the matrix $2a\,  e_1\otimes e_1 + ab\,  e_1 \otimes e_2 + ab\,  e_2 \otimes e_1 + 2b\,  e_2 \otimes e_2$, where we have two copies of the weights in the diagonal and also the basis vectors have obviously multiplied.

This construction is inspired by the graded algebra construction on vector spaces, whose dual construction is referred to as a \emph{Cofree Coalgebra}. The  Cofree-inspired coalgebra over a vector space defines a coalgebra modality structure on the identity comonad on $\fdVect_\R$, which provides another $\blstar$-model, or rather, another quantization $\C(\blstar) \to \fdVect_\R$.

\section{Clasp Diagrams}
\label{sec:claspDiagrams}

%

In order to show the semantic computations for the parasitic gap, we introduce a diagrammatic semantics. The derivation of the parasitic gap phrase is involved and its categorical or vector space interpretations require close inspection to read. The diagrammatic notation makes it easier to visualise the steps of the derivation and the final semantic form. In what follows we first introduce notation for the Clasp diagrams, then extend them with extra prospective notation necessary to model the $!$ coalgebra modality.
The basic structure of the $\C(\blstar)$ category, i.e. its objects, morphisms, monoidal product and its internal homs are as in \cite{BaezStay2011}. To these, we add the necessary diagrams for the coalgebra modality, that is the coalgebra comultiplication (copying) $\Delta$, the counit of the comonad $\epsilon$, and the comonad comultiplication $\delta$, found in figure \ref{fig:diagrams}.

\begin{figure}[!t]
	\includegraphics[scale=0.3]{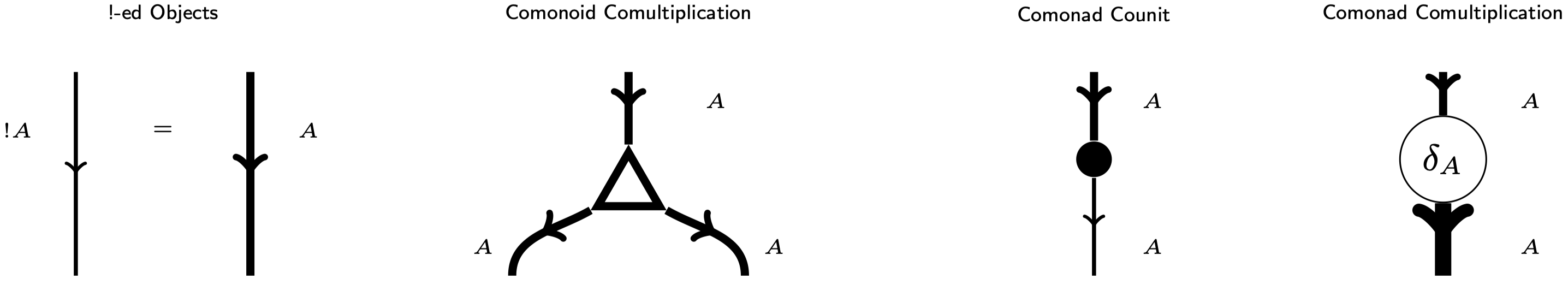}
    \caption{Diagrammatic Structure of $!$ in $\C(\blstar)$}
\label{fig:diagrams}
\end{figure}

\section{Linguistic Examples}
\label{sec:ling}

The motivating example of \cite{Kanovich2016} was the parasitic gap example ``the paper that John signed without reading",  with the following lexicon:
\begin{eqnarray*}
&&	\left \{ (\mbox{The}, NP \bs N), (\mbox{paper}, N), (\mbox{that}, (N\bs N)/(S/!NP)), (\mbox{John}, NP), (\mbox{signed}, (NP\bs S)/NP),   \right. \\
 && \left.(\mbox{without}, ((NP\bs S)\bs(NP \bs S))/NP), (\mbox{reading}, NP / NP)\right \}.
\end{eqnarray*}

The  $\blstar$ derivation  of ``the paper that John signed without reading" is in the full version of the paper \cite{mcpheat2020categorical}.  The categorical semantics of this  derivation is the following linear map.
\begin{eqnarray*}
 &&   (\semantics{NP}\Leftarrow \semantics{N}) \otimes \semantics{N} \otimes ((\semantics{N} \Rightarrow \semantics{N})\Leftarrow (\semantics{S} \Leftarrow \semantics{!NP})) \otimes  \semantics{NP}   \otimes ((\semantics{NP} \Rightarrow \semantics{S})\Leftarrow \semantics{NP})  \otimes \\
 &&  
    (((\semantics{NP} \Rightarrow \semantics{S}) \Rightarrow (\semantics{NP} \Rightarrow \semantics{S}))\Leftarrow \semantics{NP}) \otimes
    (\semantics{NP} \Leftarrow \semantics{NP}) 
    \longrightarrow
    \semantics{NP}
\end{eqnarray*}
defined on elements as follows, for the bracketed subscripts  in Sweedler notation:
\begin{align*}
	the(-) \otimes \ov{paper} \otimes that(-,-) \otimes \ov{John} \otimes signed(-,-) \otimes without( -,-,-) \otimes reading(-) \\
	\mapsto \quad the(that(\ov{paper}, without(\ov{John}, signed(-,-_{(1)}), reading(-_{(2)}))))
\end{align*}

The  diagrammatic interpretation of the $\blstar$-derivation is depicted in figure \ref{fig:pGapEx}
\begin{figure}
	\centering
	\includegraphics[scale=0.35]{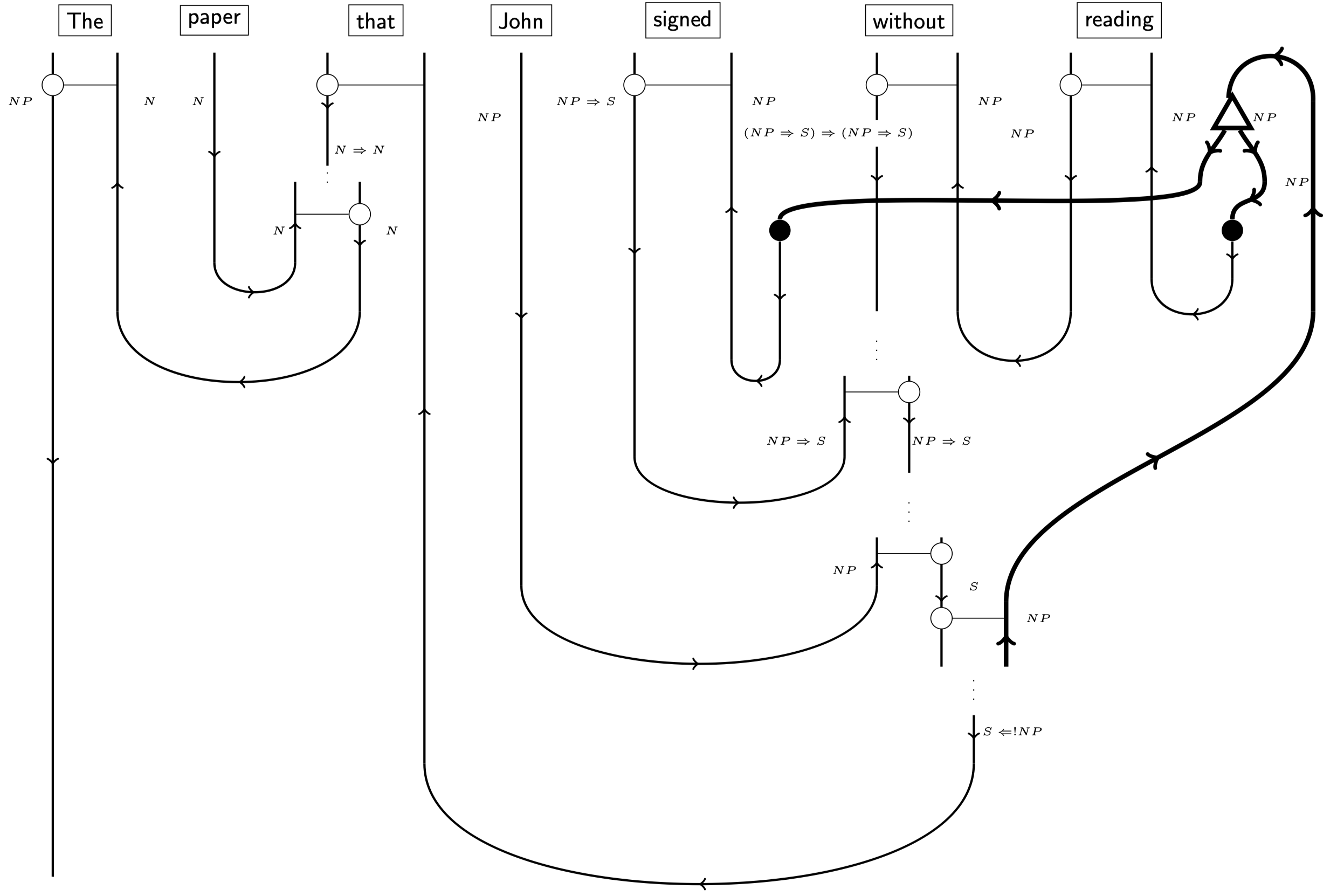}
	\caption{Diagrammatic interpretation of ``The paper that John signed without reading"}\label{fig:pGapEx}
\end{figure}

This is obtained via steps mirroring the steps of the derivation tree of the example,  please see the full version of the paper \cite{mcpheat2020categorical}.
	
\section{Experimental Comparison}\label{sec:experiments}
The reader might have rightly been wondering which one of these interpretations, the Cogebra or the Cofree-inspired coalgebra model, produces the correct semantic representation. 
We implement  the resulting vector   representations  on  large corpora of data  and experiment with a disambiguation task to provide insights.  The disambiguation task was that  originally proposed in \cite{GrefenSadrEMNLP}, but we work  with  the  data set of \cite{KartSadrCoNLL}, which  contains verbs deemed as \emph{genuinely ambiguous} by  \cite{PickeringFrisson}, as those verbs whose meanings are not related to each other. We extended this latter with a second verb and a preposition that provided enough data to turn the dataset from a set of pairs of transitive sentences to a set of pairs of  parasitic gap phrases.  As an example, consider the verb {\bf file}, with meanings {\bf register} and {\bf smooth}. Example entries of the original  dataset and its extension are below; the full dataset is available from \url{https://msadrzadeh.com/datasets/}.

\begin{figure}[!h]
\centering
\begin{tabular}{l}
S: accounts  that the local government  filed\\
 S1: accounts  that the local government  registered\\
 S2: accounts  that the local government smoothed \\
  P: accounts  that the local government  filed after inspecting\\
P1: accounts  that the local government  registered after inspecting\\
P2: accounts  that the local government smoothed after inspecting
 \end{tabular}
\\  \vspace{0.5cm}
 \begin{tabular}{l}

P': nails that the young woman filed after  cutting \\
P'1: nails that the young woman registered after  cutting \\
P'2: nails that the young woman smoothed after  cutting \\
 S': nails that the young woman filed\\
 S'1: nails that the young woman registered\\
 S'2: nails that the young woman smoothed

 \end{tabular}
 \vspace{0.2cm}
\end{figure}
 
 We  follow the same procedure  as in  \cite{KartSadrCoNLL} to disambiguate the phrases with the ambiguous verb:  (1) build vectors for phrases P, P1,  P2, and also P', P'1,  P'2, (2) check whether vector of P is closer to vector of P1 or vector of P2 and whether P' is close to P'2 or P'1. If yes, then we have two correct outputs, (3) compute a mean average  precision (MAP), by counting in how many of the pairs, the vector of the phrase with the ambiguous verb is closer to that of the phrase with its appropriate meaning.

In order to instantiate our categorical model on this task and experiment with the different copying maps, we proceed as follows. We work with the parasitic gap phrases that have the general form: ``A’s the B C’ed Prep D’ing". Here,   C and D are verbs and their vector representations are multilinear maps. C is a bilinear map  that takes A and B as input and D is a linear map that takes A as input. For now, we represent the preposition Prep by the trilinear map $Prep$. The vector representation of the parasitic gap phrase with a proper copying operator is $Prep({C}(\ov{B},\ov{A}),  D(\ov{A}))$, for $C$ and $D$ multilinear maps and $\ov{A}$ and $\ov{B}$, vectors, and where $\ov{A} =  \sum_i C^A_i n_i$. The different types of copying applied to this, provide us with the  following options. 
\begin{eqnarray*}
\mbox{\bf Cogebra copying} &&
(a)\,   Prep  \left( C(\ov{B}, \ov{A}),  D(\sum_i n_i) \right), \quad (b)\,  Prep  \left( C(\ov{B}, \sum_i n_i),  D(\ov{A})\right)\\
\mbox{\bf Cofree-inspired copying} && Prep \left(C(\ov{B}, \ov{A}) + D(\vec{1}), C(\ov{B},\vec{1}) + D(\ov{A})\right)
\end{eqnarray*}
In the \emph{copy object} model of \cite{KartSadrCoNLL}, these choices become as follows:
\begin{eqnarray*}
\mbox{\bf Cogebra copying} && (a)\, Prep  \left(\ov{A} \odot ({C} \times \ov{B}),  {D} \times \sum_i n_i \right)\\
&&(b)\, Prep  \left((\sum_i n_i) \odot ({C} \times \ov{B}),  ({D} \times \ov{A})\right)\\
\mbox{\bf Cofree-inspired copying} &&Prep \left((\ov{A} \odot ({C} \times \ov{B}))  +  ({D} \times \vec{1}), (\vec{1} \odot ({C} \times \vec{B})) + ({D} \times \vec{A})\right)
\end{eqnarray*}

\noindent
For comparison, we also  implemented a  model where a \emph{Full} copying operation $\Delta(\ov{v}) = \ov{v} \otimes \ov{v}$ was used, resulting in a third option  $Prep  \left( C(\ov{B}, \ov{A}),  D(A) \right)$, with the \emph{copy-object} model 
\[
Prep  \left(\ov{A} \odot ({C} \times \ov{B}),  {D} \times \ov{A} \right)
\]
Note that this copying is non-linear and thus cannot be an instance of our $\fdVect_\R$ categorical semantics; we are only including it to study how the other copying models  will do in relation to it.   

\begin{table}[!t]
\caption{Parasitic Gap Phrase Disambiguation Results}
\centering
\begin{tabular}{|c||c||c|c||c|c|}
\hline
 Model &  MAP &  Model &  MAP  & Model & MAP \\
\hline \hline
  BERT & 0. 65 & FT(+) & 0.55  & W2V (+) & 0.46 \\
  \hline
Full & 0.48 & Full &  0.57 & Full & 0.54 \\
  Cofree-inspired & 0.47 & Cofree-inspired  & 0.56 & Cofree-inspired  & 0.54 \\
 Cogebra (a)  & 0.46 &  Cogebra (a) & 0.56 & Cogebra (a) & 0.46  \\
  Cogebra (b) & 0.42 & Cogebra (b)  & 0.37 & Cogebra (b)  & 0.39\\
\hline
\end{tabular}
\centering
\label{tab:exampleres}
\end{table}

The results of experimenting with these models are presented  in table \ref{tab:exampleres}. We experimented with three neural embedding architectures: BERT \cite{BERT}, FastText (FT) \cite{FastText}, and Word2Vec CBOW (W2V) \cite{Word2Vec}. For details of the training, please see the full version of the paper \cite{mcpheat2020categorical}.

Uniformly,   in all the  neural architectures, the Full model provided a better disambiguation than other linear copying models. This better performance was closely followed by the Cofree-inspired model: in BERT, the Full model obtained an MAP of 0.48, and the Cofree-inspired model an MAP of 0.47; in FT, we have 0.57 for Full and 0.56 for Cofree-inspired; and in W2V we have 0.54 for both models. Also uniformly, in all of the neural architectures, the Cogebra (a)   did better than the Cogebra (b). It is not surprising that the Full copying  did better than other two copyings, since this is the model that provides two identical copies of the head noun $A$. This kind of copying can only be obtained via the application of a non-linear $\Delta$. The fact that our linear Cofree-inspired copying  closely followed the Full model, shows that in the absence of  Full copying, we can always use the Cofree-inspired as a reliable approximation.  It was also not surprising  that  the Cofree-inspired model did better than either of   the Cogebra models, as this model uses the sum of the two possibilities, each encoded in one of the  Cogebra (a) or (b). That Cogebra (a) performed better than Cogebra (b),  shows that it is more important to have a full copy of the object for the main verb rather than the secondary verb of a parasitic gap phrase. Using this, we can say that  verb $C$ that got a full copy of its object $A$,  played a more important role in disambiguation, than verb $D$,  which only got a vector of 1's as a copy of $A$.    Again, this is  natural, as the secondary verb only provides subsidiary information.    

The most effective disambiguation  of the new dataset was obtained via the BERT phrase vectors, followed by the {Full} model.  BERT  is a contextual neural network architecture that provides different meanings for words in different contexts, using a large set of tuned parameters   on large corpora of data. There is evidence that BERT's phrase vectors do encode some grammatical information in them. So it  is not surprising that these embeddings  provided the best disambiguation result. In the other neural embeddings: W2V and FT, however, the Full and its Cofree-inspired approximation provided better results. Recall that in these models, phrase embeddings are obtained by  adding the word embeddings, and  addition forgets the grammatical structure. That the type-driven categorical model  outperformed these models   is a very promising  result.

\section{Future Directions}\label{sec:future}
There are plenty of questions that arise from the theory in this paper, concerning alternative syntaxes, coherence, and optimisation.

One avenue we are pursuing is to bound the $!$-modality of $\blstar$. This is desirable from a natural language point of view, as the $!$ of linear logic symbolises \emph{infinite} reuse, however at no point in natural language is this necessary. Thus bounding $!$ by indexing with natural numbers, similar to Bounded Linear Logic \cite{Girard1992} may allow for a more intuitive notion of resource insensitivity closer to that of natural language.

Showing the coherence of the diagrammatic semantics by using the proof nets of Modal Lambek Calculus \cite{moortgat1996multimodal}, developed for clasp-string diagrams in \cite{wijnholds2017coherent} constitutes work in progress.  Proving coherence would allow us to do all our derivations diagrammatically, making the sequent calculus labour superfluous. However, we suspect there are better notations for the diagrammatic semantics perhaps more closely related to the proof nets of linear logic.

Applications of type-logics with limited contraction and permutation to movement phenomena is a line of research initiated in \cite{Morrilletal1990,Barryetal1995} with a recent boost in \cite{MV2016,Morrill2017,Morrill2018}, and also in \cite{WijnSadr2019}. Finding commonalities with these approaches is future work. 

We would also like to see how much we can improve the implementation of the cofree-inspired model in this paper. This involves training better tensors, hopefully by using neural networks methods. 

\section{Acknowledgement}
Part of the motivation behind this work came from the Dialogue and Discourse Challenge project of the Applied Category Theory adjoint school during the week 22–26 July 2019. We would like to  thank the organisers of the  school.  We would also like to thank  Adriana Correia, Alexis Toumi, and Dan Shiebler, for discussions. McPheat acknowledges support from the UKRI EPSRC Doctoral Training Programme scholarship, Sadrzadeh from the Royal Academy of Engineering Industrial Fellowship IF-192058. 

\bibliographystyle{plainurl}
\bibliography{references.bib}

\end{document}